\documentclass[aps,prd,preprint,nofootinbib]{revtex4-2}

\usepackage[T1]{fontenc}
\usepackage{lmodern}
\usepackage{microtype}
\usepackage{amsmath,amssymb,amsthm}
\usepackage{bm}
\usepackage{graphicx}
\usepackage{xcolor}
\usepackage{hyperref}

\definecolor{linkblack}{RGB}{20,20,20}
\definecolor{citeblue}{RGB}{0,70,140}
\definecolor{urlblue}{RGB}{0,90,120}
\hypersetup{
  colorlinks=true,
  linkcolor=linkblack,
  citecolor=citeblue,
  urlcolor=urlblue
}
\newtheorem{theorem}{Theorem}
\newtheorem{proposition}{Proposition}

\begin{document}

\title{Obstructions to Minimal Regular Black Hole Cosmologies}

\author{Damien A. Easson}
\email{easson@asu.edu}
\affiliation{Department of Physics, Arizona State University, Tempe, AZ 85287, USA}
\affiliation{Beyond Center for Fundamental Concepts in Science, Arizona State University, Tempe, AZ 85287, USA}

\begin{abstract}
We derive an obstruction to Friedmann--Lema\^itre--Robertson--Walker (FLRW) daughter cosmologies from
static, asymptotically flat regular black holes. The trapped region of such a
parent is Kantowski--Sachs rather than FLRW, so the daughter must be introduced
as a separate matched region. For closed daughters, the angular Darmois
condition is controlled by the Misner--Sharp mass: asymptotic flatness and
finite ADM mass force the induced density to decay as $A^{-3}$, while the
$k=+1$ curvature term scales as $A^{-2}$. The minimal closed branch is
therefore bounded rather than indefinitely expanding. Flat and open daughters
avoid this boundedness mechanism. For their maximal homogeneous FLRW
continuations, the regular-end affine-ANEC theorem excludes simultaneous
non-staticity, curvature regularity, null geodesic completeness, and ANEC
consistency. For Bardeen, the parent source does not naturally supply the
late-time support needed for an unbounded closed daughter. Within these
criteria, a viable FLRW daughter therefore requires additional structure, such
as modified asymptotics, nonminimal matching, non-FLRW evolution, or an
additional stress-energy component.
\end{abstract}

\maketitle
\newpage

\section{Introduction}
\label{sec:intro}

A recurring idea in gravitational physics is that black-hole interiors may
admit a cosmological interpretation. In singular examples such as
Schwarzschild, the trapped region can be rewritten as a Kantowski--Sachs
spacetime, while in regular black holes the singularity is replaced by a
nonsingular core, often locally de Sitter-like near the center. This raises a
simple question: can a regular black-hole interior furnish a genuine daughter
cosmology, or does such an interpretation require additional global structure
or a new stress-energy component?

To address this question, we consider static, spherically symmetric,
asymptotically flat regular black holes with trapped interiors, using the
Bardeen black hole as a representative example. By a \emph{minimal daughter construction} we mean a no-shell FLRW matching
whose evolution is fixed by the parent metric, with no added late-time bulk
sector, modified asymptotics, or independent shell stress tensor. The FLRW
daughter is therefore a separate homogeneous and isotropic matched
continuation rather than a coordinate rewriting of the Kantowski--Sachs
trapped region. Here ``no shell'' means no Israel surface stress tensor; we do
not assume that the parent and daughter arise from one globally defined
microscopic matter field. Completeness statements about flat/open daughters
refer to the maximal homogeneous FLRW continuation determined by the matched
scale-factor law, not to an otherwise unspecified global parent-plus-daughter
composite. We impose the averaged null energy condition (ANEC) as a minimal
non-exoticity criterion: it allows controlled local violations of the
pointwise null energy condition while constraining the averaged null energy
along complete null geodesics \cite{Wall:2009wi,Curiel:2014zba}. The
completeness theorem used below is formulated under the same assumption and
for the regular endpoint classes specified there
\cite{Burwig:2025hrr,Burwig:2026fsy}.

Throughout, an indefinitely expanding daughter means an unbounded branch with
scale factor $A(\tau)\to\infty$, rather than a long-lived finite expansion between turning
points. The main flat/open/closed dichotomy is formulated for the one-function
static spherical class $g_{tt}g_{rr}=-1$; the closed-branch boundedness is
extended to redshifted static parents in Sec.~\ref{subsec:redshift_functions}.\footnote{We take $c=G=1$, except when $G$ is restored for clarity in the
effective FLRW density and pressure.}

Historically, the broad idea that black-hole interiors may be related to
cosmological regions appears in black-hole-universe proposals
\cite{Pathria:1972,Easson:2001qf}, limiting-curvature and
baby-universe constructions
\cite{Blau:1986cw,Frolov:1989,Frolov:1990,FarhiGuthGuven:1990,Trodden:1993xh,Easson:2017pfe,Frolov:2021kcv},
and regular de Sitter-core scenarios
\cite{Bardeen:1968,Hayward:2006,Dymnikova:2001,AyonBeatoGarcia:2000,Dymnikova:2019}. A complementary line of work concerns
the geodesic completeness of nonsingular cosmologies
\cite{Borde:1994,BordeGuthVilenkin:2003,LesnefskyEassonDavies:2023,Easson:2024fzn,Easson:2024uxe,Easson:2026ret}.
More recently, Oppenheimer--Snyder-type and generalized constructions have
been developed for regular black holes and nonsingular cosmologies with
Hayward, Bardeen, and Minkowski-type cores~\cite{Shojai:2022,Bobula:2024,Bueno:2025,LiWuGe:2025,Bobula:2026zlq}.

We show that a regular core is not by itself enough to produce an FLRW daughter
having the full list of desired properties: indefinite expansion, curvature
regularity, geodesic completeness, and ANEC consistency. The result separates
into two logically distinct obstructions. For closed daughters, boundedness is
controlled by asymptotic flatness and finite ADM mass through the no-shell
matching equation. For flat and open daughters, the obstruction is more
general: for maximal FLRW continuations in the regular endpoint classes of the
cited theorem, it follows from the affine-ANEC completeness result independently
of the black-hole matching construction.

The rest of the paper is organized as follows. In Sec.~\ref{sec:setup_parent}
we review the trapped-region geometry of the parent spacetime. In
Sec.~\ref{sec:static_obstruction} we show that the trapped interior is not
itself an FLRW cosmology. In Sec.~\ref{sec:daughter_matching} we derive the
no-shell FLRW daughter equations and discuss a redshift-function extension.
In Sec.~\ref{sec:dichotomy} we prove the closed-branch boundedness theorem,
explain how the flat/open branches fall under the general FLRW completeness
obstruction, and analyze the expansion duration and tuning of the closed Bardeen branch. In
Sec.~\ref{sec:discussion} we discuss core type and source limitations.

\section{Trapped interiors of regular black holes}
\label{sec:setup_parent}

We begin by considering static,
spherically symmetric metrics of the form
\begin{equation}
ds^2=-f(r)\,dt^2+f(r)^{-1}dr^2+r^2d\Omega_2^2,
\label{eq:parent_static_metric}
\end{equation}
with an asymptotically flat exterior and a regular core
\cite{Bardeen:1968,AyonBeatoGarcia:2000,Hayward:2006,Ansoldi:2008,BalartVagenas:2014,Fan:2016rih}. On any trapped
interval \(f(r)<0\), defining
\begin{equation}
T\equiv r,\qquad \chi\equiv t,\qquad
d\tau=\frac{dT}{\sqrt{|f(T)|}},
\label{eq:trapped_coords_compact}
\end{equation}
recasts the metric as
\begin{align}
ds^2&=-d\tau^2+a(\tau)^2d\chi^2+b(\tau)^2d\Omega_2^2,
\nonumber\\
a(\tau)&=\sqrt{|f(T(\tau))|},
\qquad
b(\tau)=T(\tau).
\label{eq:ks_metric_compact}
\end{align}
For this one-function class, one has \(\dot b=a\) with the orientation chosen
so that \(T\) increases with \(\tau\); equivalently, \(|\dot b|=a\) independent
of orientation.

We choose to work with the Bardeen geometry
\cite{Bardeen:1968,AyonBeatoGarcia:2000},
\begin{equation}
f(r)=1-\frac{2Mr^2}{(r^2+g^2)^{3/2}}.
\label{eq:bardeen_f_compact}
\end{equation}
For \(M/g>3\sqrt{3}/4\), this solution has two horizons. Its causal structure
then consists of a static exterior, a finite trapped nonstatic band
\(r_-<r<r_+\), and a regular static core \(0\le r<r_-\). Near the center,
\begin{equation}
f(r)=1-\frac{2M}{g^3}r^2+O(r^4),
\qquad
\Lambda_{\rm eff}=\frac{6M}{g^3},
\label{eq:bardeen_core_compact}
\end{equation}
so the core is locally de Sitter-like. The effective source may be written as
\begin{equation}
T^\mu{}_{\nu}=\mathrm{diag}\bigl(-\rho(r),p_r(r),p_T(r),p_T(r)\bigr),
\label{eq:bardeen_source_tensor_compact}
\end{equation}
with
\begin{align}
\rho(r)&=\frac{3Mg^2}{4\pi(r^2+g^2)^{5/2}},
\nonumber\\
p_r(r)&=-\rho(r),
\nonumber\\
p_T(r)&=\frac{3Mg^2(3r^2-2g^2)}{8\pi(r^2+g^2)^{7/2}}.
\label{eq:bardeen_source_compact}
\end{align}
Thus the source is anisotropic away from the center and isotropizes only in the
\(r\to0\) limit.~\footnote{The same trapped-region geometry can also be used in
time-dependent classical double-copy constructions
\cite{Easson:2020esh,Easson:2026wex}. Here we use only the geometric and
source-side ingredients needed for the cosmological obstruction.}

\section{The trapped interior is not FLRW}
\label{sec:static_obstruction}

The following familiar observation sets our stage: the trapped
region of a static spherical black hole is Kantowski--Sachs rather than FLRW.
Thus the FLRW daughter considered below is a separate matched region, and not simply a
coordinate reinterpretation of the trapped band.

\begin{proposition}[Static trapped-region obstruction]
\label{prop:static_obstruction}
Consider a one-function static, spherically symmetric parent spacetime of the
form \eqref{eq:parent_static_metric} with a trapped interval $f(r)<0$. In the
induced Kantowski--Sachs description \eqref{eq:ks_metric_compact}, the trapped
interior is not an exact FLRW cosmology in its natural slicing.

Moreover, defining
\begin{equation}
H_\parallel:=\frac{\dot a}{a},
\qquad
H_\perp:=\frac{\dot b}{b},
\end{equation}
isotropic expansion on a connected open interval, $H_\parallel=H_\perp$, is
equivalent to
\begin{equation}
a(\tau)=C\,b(\tau),
\end{equation}
for some constant $C>0$, and hence to
\begin{equation}
|f(T)|=C^2T^2.
\end{equation}
Thus isotropic expansion can occur only in this highly special case and is
nongeneric within the one-function static class. This is a necessary condition
for isotropic expansion, not a sufficient condition for an exact FLRW geometry.
\end{proposition}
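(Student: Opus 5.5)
The plan is to work directly with the Kantowski--Sachs form \eqref{eq:ks_metric_compact} and split the statement into its two parts. For the first part (not exact FLRW in the natural slicing), I would observe that the slices $\tau=\mathrm{const}$ have induced metric $a^2d\chi^2+b^2d\Omega_2^2$. This is a product $\mathbb{R}\times S^2$ (or a cylinder after identifying $\chi$) with a round two-sphere factor of radius $b=T>0$ and a flat line factor. Such a metric is not a space of constant curvature. Concretely, its Ricci tensor has eigenvalue $0$ along $\partial_\chi$ and $1/b^2$ along the sphere directions, so it is not proportional to the metric. Since an FLRW metric in its comoving slicing requires maximally symmetric spatial sections, the trapped interior cannot be FLRW with respect to this slicing, whatever $a(\tau)$ is.

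For the second part, I would compute $H_\parallel=\dot a/a$ and $H_\perp=\dot b/b$ on a connected open interval where $a,b>0$. Both are positive there, since $f<0$ strictly and $T>0$ in the trapped band. The equality $H_\parallel=H_\perp$ is equivalent to $\frac{d}{d\tau}\ln(a/b)=0$, and on a connected interval this is equivalent to $a/b=C$ for a constant $C>0$. The converse is immediate by differentiating $a=Cb$. Substituting $a=\sqrt{|f(T)|}$ and $b=T$ then gives $|f(T)|=C^2T^2$ on the corresponding $T$-interval. The map $\tau\mapsto T$ is a diffeomorphism there, since $dT/d\tau=\sqrt{|f|}>0$ by \eqref{eq:trapped_coords_compact}. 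One could also use $\dot b=a$ to rewrite the condition as $\dot a=a^2/b$, but the logarithmic-derivative route is cleaner.

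To justify ``nongeneric'', I would note that $f=1-\ldots$ with $|f|=C^2T^2$ forces $f(r)=-C^2r^2$ on an open interval. That is a single one-parameter family, which is not of the asymptotically flat regular form with a trapped band between horizons. In particular, analyticity of the Bardeen $f$ in \eqref{eq:bardeen_f_compact} would propagate $f=-C^2r^2$ to all $r$, which contradicts $f\to1$. I would close by remarking that even in this special case the spatial sections remain $\mathbb{R}\times S^2$ by the first part, so the condition is necessary but not sufficient.

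The main subtlety I expect is in the first step: stating precisely what ``exact FLRW in its natural slicing'' means. The argument must rule out FLRW with respect to the given $\tau$-foliation, not with respect to every possible slicing, so I would phrase it in terms of the intrinsic curvature of the $\tau=\mathrm{const}$ hypersurfaces.
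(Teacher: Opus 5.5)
Your proposal is correct and follows the paper's proof essentially step for step. The product structure $\mathbb{R}\times S^2_{b}$ of the $\tau=\mathrm{const}$ slices rules out constant curvature: you use the Ricci eigenvalues $0$ and $1/b^2$ where the paper uses the corresponding sectional curvatures, which is equivalent in three dimensions. Likewise, $H_\parallel=H_\perp$ integrates on a connected interval to $a=Cb$, i.e.\ $|f(T)|=C^2T^2$, exactly as in the paper. Your analyticity argument is a useful addition, because the paper only asserts (after the proof) that the Bardeen $|f|$ is not of this form on any open interval. For merely smooth $f$, though, the ``nongeneric'' claim stays heuristic: agreeing with $-C^2r^2$ on an interval does not by itself contradict asymptotic flatness.
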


\begin{proof}
At fixed $\tau$, the induced spatial metric is
\begin{equation}
h_{ij}dx^i dx^j
=
a(\tau)^2 d\chi^2+b(\tau)^2 d\Omega_2^2 .
\end{equation}
Thus each spatial slice has the direct-product geometry
$\mathbb R_\chi\times S^2_{\,b(\tau)}$, lacking constant
sectional curvature: two-planes containing the $\chi$-direction have zero
sectional curvature, while two-planes tangent to the sphere have sectional
curvature $1/b(\tau)^2$. Therefore the trapped interior is not an exact FLRW
cosmology in its natural slicing.

In the one-function case, one has $\dot b=a$. A necessary condition for exact
FLRW behavior on an open interval is isotropic expansion,
$H_\parallel=H_\perp$, or equivalently
\begin{equation}
\frac{\dot a}{a}=\frac{\dot b}{b}.
\end{equation}
On a connected open interval this integrates to $a(\tau)=C\,b(\tau)$ for some
constant $C>0$, and the converse is immediate. Since $b=T$ and
$a=\sqrt{|f(T)|}$, the condition becomes
\begin{equation}
|f(T)|=C^2T^2.
\end{equation}

This is a highly special condition, and it still does not make the geometry
FLRW. Even when it holds, it only equalizes the directional Hubble rates; it
does not alter the intrinsic product geometry of the $\tau=\mathrm{const}$
slices. The natural spatial slices remain $\mathbb R\times S^2$ rather than
constant-curvature three-spaces.
\end{proof}

For the Bardeen metric \eqref{eq:bardeen_f_compact}, one has on the trapped
interval
\begin{equation}
|f(T)|=\frac{2MT^2}{(T^2+g^2)^{3/2}}-1,
\label{eq:bardeen_absf_obstruction}
\end{equation}
which is not of the form \(C^2T^2\) on any open interval. Hence the Bardeen
trapped region does not admit isotropic expansion over any finite portion of
its interior evolution. More importantly, the trapped region is only the finite
band \(r_-<r<r_+\): it terminates at the inner horizon and is followed by a
regular static core. Thus even though the core becomes locally de Sitter-like
as \(r\to0\), the trapped Kantowski--Sachs phase itself does not evolve into a
daughter FLRW branch.

Hence, Proposition~\ref{prop:static_obstruction} rules out the simplest possibility:
the trapped region of the static parent is not already the daughter universe.
For the Bardeen trapped band this conclusion is corroborated by the invariants.
FLRW spacetimes are conformally flat, whereas, for a one-function spherical
metric written as \(F(r)=1-2m(r)/r\), the quadratic Weyl invariant is
\begin{align}
C_{abcd}C^{abcd}
&=
\frac{\bigl[r^2F''-2rF'+2F-2\bigr]^2}{3r^4}
\nonumber\\
&=
\frac{48}{r^6}
\left(m-\frac{2r m'}{3}+\frac{r^2m''}{6}\right)^2 .
\label{eq:weyl_mass_function}
\end{align}
For Bardeen, \(m(r)=Mr^3/(r^2+g^2)^{3/2}\), so
\begin{equation}
C_{abcd}C^{abcd}
=
\frac{12M^2r^4(2r^2-3g^2)^2}{(r^2+g^2)^7}.
\label{eq:bardeen_weyl_invariant}
\end{equation}
This invariant vanishes at \(r=0\) and at the isolated sphere
\(r=\sqrt{3/2}\,g\), but it is not identically zero on any open radial
interval. Hence no open portion of the Bardeen trapped band is conformally
flat, and the trapped band is not an FLRW spacetime.

\section{No-shell FLRW daughters from static parents}
\label{sec:daughter_matching}

Having shown the Kantowski--Sachs trapped interior of the parent is not itself the daughter cosmology, we now turn to the simplest global continuation one might
try: a no-shell matching to an FLRW daughter region. In this matching,
the FLRW daughter is a separate spacetime region attached across a spherical
hypersurface and constrained by the Darmois--Israel no-shell conditions
\cite{OppenheimerSnyder:1939,Israel:1966}. This is the standard spherical
FLRW/static matching structure we use as a minimal control problem rather
than as a full collapse model.

The use of an FLRW daughter is not arbitrary, since for regular
black-holes such as Bardeen, the source isotropizes and the geometry
approaches a de Sitter-like core near \(r=0\). Since de Sitter space admits
homogeneous and isotropic slicings, an FLRW daughter is the natural optimistic
continuation to test. The question now is whether the most
symmetric no-shell daughter ansatz suggested by the regular core can support
an indefinitely expanding cosmology without additional global structure or some
new matter content.

\subsection{General junction equation}
\label{subsec:general_junction}

Here the \(+\) label denotes the static parent region, while the \(-\) label
denotes the FLRW daughter region.
We consider a static parent geometry, written in curvature coordinates as
\begin{equation}
ds_+^2=-F(R)\,dT^2+\frac{dR^2}{F(R)}+R^2 d\Omega_2^2,
\label{eq:generic_exterior_metric}
\end{equation}
and an FLRW daughter region
\begin{equation}
ds_-^2=-d\tau^2+A(\tau)^2\left[\frac{d\chi^2}{1-k\chi^2}+\chi^2 d\Omega_2^2\right],
\qquad
k=0,\pm1.
\label{eq:generic_frw_metric}
\end{equation}
We define
\begin{equation}
H:=\frac{\dot A}{A}.
\label{eq:frw_hubble}
\end{equation}

For the flat or open cases, the matching hypersurface is at fixed comoving
radius
\begin{equation}
\chi=\chi_b=\text{const},
\qquad
R_b(\tau)=A(\tau)\chi_b.
\label{eq:boundary_flat_open}
\end{equation}
On the FLRW side, the induced metric on the hypersurface is
\begin{equation}
ds^2_\Sigma=-d\tau^2+R_b(\tau)^2d\Omega_2^2.
\label{eq:induced_metric_interior}
\end{equation}
On the exterior side, parameterizing the hypersurface by
\(x_+^\mu(\tau,\theta,\varphi)=(T(\tau),R_b(\tau),\theta,\varphi)\), one finds
\begin{equation}
ds^2_\Sigma=-\left(F(R_b)\dot T^{\,2}-\frac{\dot R_b^{\,2}}{F(R_b)}\right)d\tau^2
+R_b^2d\Omega_2^2,
\label{eq:induced_metric_exterior}
\end{equation}
so matching the first fundamental form gives
\begin{equation}
F(R_b)\dot T^{\,2}-\frac{\dot R_b^{\,2}}{F(R_b)}=1.
\label{eq:tt_matching_condition}
\end{equation}
With the standard orientation choice for the outward normal, the angular
extrinsic-curvature condition gives
\begin{equation}
\dot R_b^{\,2}+F(R_b)=1-k\chi_b^2.
\label{eq:rr_matching_condition}
\end{equation}
The remaining \(\tau\tau\) component is consistent with the same comoving
no-shell construction; for completeness we derive both components in
Appendix~\ref{app:darmois}.
Using \(R_b=A\chi_b\) and dividing by \(A^2\chi_b^2\), one obtains
\begin{equation}
H^2+\frac{k}{A^2}
=
\frac{1-F(A\chi_b)}{A^2\chi_b^2}.
\label{eq:general_matching_equation}
\end{equation}
Equation~\eqref{eq:general_matching_equation} is the key point of the
minimal construction: with no shell stress tensor, the daughter Friedmann
equation is inherited directly from the parent metric function $F(R)$.
Equivalently, defining the Misner--Sharp mass function of the static parent geometry
\cite{MisnerSharp:1964} by
\begin{equation}
 m(R):=\frac{R}{2}\bigl[1-F(R)\bigr],
\label{eq:misner_sharp_mass}
\end{equation}
Eq.~\eqref{eq:general_matching_equation} becomes the invariant relation
\begin{equation}
H^2+\frac{k}{A^2}
=
\frac{2m(A\chi_b)}{A^3\chi_b^3}
\qquad (k=0,-1),
\label{eq:general_matching_mass_function}
\end{equation}
and, for the closed parametrization of
Sec.~\ref{subsec:closed_daughters}, in which \(\psi_b\) is the fixed comoving
angular position of the closed FLRW boundary and \(R_b=A\sin\psi_b\),
\begin{equation}
H^2+\frac{1}{A^2}
=
\frac{2m(A\sin\psi_b)}{A^3\sin^3\psi_b}.
\label{eq:closed_matching_mass_function}
\end{equation}
This form makes clear that the matching equation is controlled by the parent
mass profile rather than by a coordinate artifact. Although the preceding
derivation was written in static coordinates away from coordinate horizons,
the resulting no-shell evolution equation extends by continuity across simple
zeros of \(F\).

\subsection{Redshift functions do not rescue the closed branch}
\label{subsec:redshift_functions}

A natural question is whether the closed-branch obstruction is an artifact of the
special choice \(g_{TT}g_{RR}=-1\). To test this, we allow a nontrivial
static redshift function. Under the same comoving no-shell assumptions, the
angular Darmois condition remains controlled by the Misner--Sharp mass
profile, while the redshift function enters only through the
\(K_{\tau\tau}\) condition.

Consider the more general static spherical
metric
\begin{equation}
 ds_+^2=-e^{2\Phi(R)}F(R)dT^2+\frac{dR^2}{F(R)}+R^2d\Omega_2^2 \,,
\label{eq:general_redshift_metric}
\end{equation}
where \(\Phi(R)\) is the redshift function.
For a comoving FLRW boundary, the angular Darmois condition is unchanged:
\begin{equation}
 \dot R_b^{\,2}+F(R_b)=\beta^2,
\label{eq:redshift_angular_condition}
\end{equation}
where
\begin{equation}
\beta^2=
\begin{cases}
1-k\chi_b^2, & k=0,-1,\\
\cos^2\psi_b, & k=+1.
\end{cases}
\label{eq:beta_definition}
\end{equation}
Thus, for closed daughters, the Friedmann equation and its large-\(A\)
asymptotics are controlled by \(F(R)\), equivalently by the Misner--Sharp mass,
and not by the redshift function \(\Phi\). If
\(F(R)=1-2M/R+o(R^{-1})\), the same asymptotic obstruction to unbounded closed
expansion follows.

The $\tau\tau$ condition is more restrictive: because the FLRW boundary is
comoving, $K^-_{\tau\tau}=0$, and hence the parent-side trajectory must be
geodesic. For Eq.~\eqref{eq:general_redshift_metric}, the radial geodesic
equation is
\begin{equation}
\ddot R_b=-\frac12 F'(R_b)
-\Phi'(R_b)\bigl(\dot R_b^{\,2}+F(R_b)\bigr).
\label{eq:redshift_geodesic_equation}
\end{equation}
On any nonstatic trajectory, differentiating
Eq.~\eqref{eq:redshift_angular_condition} away from isolated turning points
gives
\begin{equation}
\ddot R_b=-\frac12F'(R_b),
\label{eq:redshift_differentiated_angular}
\end{equation}
and the equality extends through such turning points by continuity. Combining
Eqs.~\eqref{eq:redshift_geodesic_equation} and
\eqref{eq:redshift_differentiated_angular} gives
\begin{equation}
\Phi'(R_b)\bigl(\dot R_b^{\,2}+F(R_b)\bigr)=0.
\end{equation}
Using Eq.~\eqref{eq:redshift_angular_condition}, this becomes
\begin{equation}
\Phi'(R_b)\,\beta^2=0.
\label{eq:redshift_no_shell_constraint}
\end{equation}
For flat/open boundaries and for non-equatorial closed boundaries,
$\beta^2>0$, so a nonstatic full Darmois matching requires
$\Phi'(R_b)=0$ throughout the swept trajectory. In the equatorial closed case
$\beta^2=0$, Eq.~\eqref{eq:redshift_no_shell_constraint} imposes no condition
on $\Phi$, but the angular condition itself gives
$\dot R_b^{\,2}+F(R_b)=0$, which forbids unbounded expansion when $F\to1$.

An identically static boundary is not covered by differentiating the first
integral. If $R_b=R_0$ is timelike and static, the angular condition gives
$F(R_0)=\beta^2$, while the remaining Darmois condition must be imposed
separately:
\[
\frac12F'(R_0)+\Phi'(R_0)F(R_0)=0.
\]
For $\Phi=0$ this reduces to $F'(R_0)=0$. Thus a redshift function does not
alter the large-radius closed-branch obstruction, which remains controlled by
the invariant mass profile.

\subsection{Closed daughter universes}
\label{subsec:closed_daughters}

For the closed case \(k=+1\), it is convenient to write the daughter metric as
\begin{equation}
ds_-^2=-d\tau^2+A(\tau)^2\left(d\psi^2+\sin^2\psi\,d\Omega_2^2\right),
\label{eq:closed_frw_metric}
\end{equation}
with the boundary at fixed
\begin{equation}
\psi=\psi_b=\text{const},
\qquad
R_b(\tau)=A(\tau)\sin\psi_b.
\label{eq:boundary_closed}
\end{equation}
Since the areal radius depends on \(\sin\psi_b\), we take
\(0<\psi_b\le\pi/2\) without loss of generality for the nondegenerate
branch considered below.
The matching equation becomes
\begin{equation}
H^2+\frac{1}{A^2}
=
\frac{1-F(A\sin\psi_b)}{A^2\sin^2\psi_b}.
\label{eq:closed_matching_equation}
\end{equation}

For the Bardeen exterior, this gives
\begin{equation}
H^2
=
\frac{2M}{\bigl(A^2\sin^2\psi_b+g^2\bigr)^{3/2}}
-\frac{1}{A^2},
\label{eq:bardeen_closed_matching}
\end{equation}
which clearly displays the competition between the finite-mass contribution and the
$k=+1$ positive-curvature term. At large $A$, the positive contribution falls
as $A^{-3}$, whereas the curvature term falls as $A^{-2}$ and therefore
dominates. In the next section we show that this obstruction is not special to
Bardeen, but follows from asymptotic flatness and finite ADM mass. As shown in
Sec.~\ref{subsec:redshift_functions}, adding a static redshift function does
not change the large-radius conclusion.

We may express the same result as an effective closed-FLRW source. Below, the
density $\rho_{\rm eff}$ is the homogeneous density inferred from the daughter
Friedmann equation and not the unchanged anisotropic Bardeen
nonlinear-electrodynamics stress tensor. Writing
\begin{equation}
H^2+\frac{1}{A^2}
=
\frac{8\pi G}{3}\rho_{\rm eff}(A),
\label{eq:effective_friedmann_closed_matching}
\end{equation}
one obtains
\begin{equation}
\rho_{\rm eff}(A)
=
\frac{3M}{4\pi G\,\bigl(A^2\sin^2\psi_b+g^2\bigr)^{3/2}}.
\label{eq:rhoeff_bardeen_matching}
\end{equation}
As a function of the boundary areal radius $R_b=A\sin\psi_b$, this
effective source approaches vacuum-like behavior only in the deep-core limit
$R_b\ll g$, and it falls as $A^{-3}$ at large $A$.

The corresponding effective pressure is fixed by the
conservation equation,
\[
\frac{d\rho_{\rm eff}}{d\tau}
+3H(\rho_{\rm eff}+p_{\rm eff})=0,
\]
or, equivalently,
\[
p_{\rm eff}(A)
=
-\rho_{\rm eff}(A)-\frac{A}{3}\frac{d\rho_{\rm eff}}{dA}.
\]
For Bardeen this gives (with
\(s=\sin\psi_b\) for the closed branch),
\[
p_{\rm eff}(A)
=
-\frac{3Mg^2}
{4\pi G\bigl(A^2s^2+g^2\bigr)^{5/2}},
\]
and hence
\[
w_{\rm eff}(A)
=
\frac{p_{\rm eff}}{\rho_{\rm eff}}
=
-\frac{g^2}{A^2s^2+g^2}.
\]
Thus, as a function of $R_b=A s$, the matched source approaches
vacuum-like behavior $w_{\rm eff}\to-1$ for $R_b/g\to0$ and dust-like
behavior $w_{\rm eff}\to0$ at large $R_b/g$. The physical inner turning
point occurs at finite $R_b$, so the equation of state at the bounce need not
be parametrically close to $-1$. The source supplies acceleration only for
\(A^2s^2<2g^2\), and therefore cannot provide the persistent
\(w\le -1/3\) support needed for an unbounded closed daughter.

\subsection{Flat daughter universes}
\label{subsec:flat_daughters}

For the flat case \(k=0\), Eq.~\eqref{eq:general_matching_equation} reduces to
\begin{equation}
H^2
=
\frac{1-F(A\chi_b)}{A^2\chi_b^2}.
\label{eq:flat_matching_equation}
\end{equation}
For Bardeen,
\begin{equation}
H^2
=
\frac{2M}{\bigl(A^2\chi_b^2+g^2\bigr)^{3/2}}.
\label{eq:bardeen_flat_matching}
\end{equation}
Unlike the closed case, there is no positive-curvature term forcing a turning
point, so the flat daughter branch is monotonic. Near \(A\to0\),
\begin{equation}
H^2\to \frac{2M}{g^3},
\label{eq:bardeen_flat_early_matching}
\end{equation}
and the early-time behavior is asymptotically de Sitter-like,
\begin{equation}
A(\tau)\sim e^{H_0\tau},
\qquad
H_0^2=\frac{2M}{g^3}.
\label{eq:bardeen_flat_desitter_matching}
\end{equation}
At large \(A\),
\begin{equation}
H^2\sim \frac{2M}{\chi_b^3A^3},
\label{eq:bardeen_flat_late_matching}
\end{equation}
so
\begin{equation}
A(\tau)\propto \tau^{2/3}.
\label{eq:bardeen_flat_late_solution_matching}
\end{equation}
Thus the flat daughter branch is nonsingular in cosmic time and evolves from
an asymptotically de Sitter-like past into a dust-like late phase. As we will
emphasize below, however, this does not furnish a geodesically complete
nonsingular cosmology.

\subsection{Open daughter universes}
\label{subsec:open_daughters}

For the open case \(k=-1\), Eq.~\eqref{eq:general_matching_equation} gives
\begin{equation}
H^2-\frac{1}{A^2}
=
\frac{1-F(A\chi_b)}{A^2\chi_b^2}.
\label{eq:open_matching_equation}
\end{equation}
For an asymptotically flat finite-mass parent,
\begin{equation}
H^2=
\frac{1}{A^2}
+
\frac{2M}{A^3\chi_b^3}
+o(A^{-3}).
\label{eq:open_late_H2}
\end{equation}
Equation~\eqref{eq:open_late_H2} is the large-\(A\) asymptotic limit of
Eq.~\eqref{eq:open_matching_equation}.
Thus the open branch, like the flat branch, avoids the closed-universe
recollapse argument. Its obstruction is instead the shared flat/open
completeness theorem used in Sec.~\ref{sec:dichotomy}.

If the parent has a de Sitter-like core,
\(F(R)=1-H_0^2R^2+O(R^4)\), then the open matching equation gives
\[
H^2-\frac{1}{A^2}\to H_0^2,
\qquad
\dot A^2=1+H_0^2A^2+O(A^4).
\]
Locally, the leading-order solution is the open de Sitter slicing,
\[
A(\tau)=H_0^{-1}\sinh\!\bigl[H_0(\tau-\tau_0)\bigr]+O(A^3).
\]
The endpoint \(A=0\) is not a curvature singularity, but the open slicing
covers only a geodesically incomplete patch of de Sitter space. Indeed,
$A(\tau)\sim\tau-\tau_0$ there, so a radial null geodesic reaches the endpoint
in finite affine parameter:
\[
\Delta\lambda\propto
\int_{\tau_0}^{\tau_1}A(\tau)\,d\tau<\infty.
\]
This provides the local model for the flat/open incompleteness obstruction
discussed in Sec.~\ref{sec:dichotomy}.

\section{Obstruction mechanisms}
\label{sec:dichotomy}

The preceding matching equations give the closed-branch obstruction, while the
flat/open branches are controlled by the general FLRW completeness theorem.

\begin{theorem}[Minimal asymptotically flat FLRW daughters]
\label{thm:minimal_frw_daughters}
Consider a static, spherically symmetric, asymptotically flat parent spacetime
in the one-function class $g_{tt}g_{rr}=-1$, with finite ADM mass $M>0$.
Attach an FLRW daughter across a nondegenerate comoving spherical Darmois
boundary. Assume that the matching is no-shell and that the daughter evolution
is fixed by the parent metric profile, with no additional late-time bulk
component, modified asymptotics, or independent shell stress tensor. Then the
closed daughter branch is bounded at finite scale factor and therefore does
not yield an indefinitely expanding FLRW cosmology.

For $k=0$ and $k=-1$, consider the maximal homogeneous FLRW continuation
determined by the matched scale-factor law. If that continuation is nonstatic,
curvature-regular, and has regular affine ends in the sense of
Refs.~\cite{Burwig:2025hrr,Burwig:2026fsy}, it cannot be both null
geodesically complete and ANEC-consistent. Consequently, within these explicit
hypotheses, the minimal construction does not furnish an FLRW continuation
satisfying all four desired properties: indefinite expansion, curvature
regularity, geodesic completeness, and ANEC consistency.
\end{theorem}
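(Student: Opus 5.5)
The closed and flat/open claims rest on different mechanisms, so the plan treats them separately. For $k=+1$ I will work directly from the no-shell matching equation \eqref{eq:closed_matching_mass_function}. Multiplying by $A^2$ gives
\begin{equation}
\dot A^{2}=\frac{2m(A\sin\psi_b)}{A\sin^3\psi_b}-1 .
\end{equation}
Asymptotic flatness with finite ADM mass $M>0$ means $F(R)=1-2M/R+o(R^{-1})$, equivalently $m(R)\to M$ as $R\to\infty$. In particular $m$ is bounded on $[R_1,\infty)$ for some $R_1$, and on $[0,R_1]$ it is bounded by continuity.

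From this bound, the right-hand side is at most $2\sup m/(A\sin^3\psi_b)-1$, which is negative once $A>A_*:=2\sup m/\sin^3\psi_b$. Since $\dot A^2\ge0$, the trajectory is confined to $A\le A_*$. This is exactly the $A^{-3}$ versus $A^{-2}$ competition noted after \eqref{eq:bardeen_closed_matching}, now made uniform. Nondegeneracy, $0<\psi_b\le\pi/2$, keeps $\sin\psi_b>0$, so $A_*$ is finite. In the equatorial case the same bound follows from $\dot R_b^2+F(R_b)=0$ together with $F\to1$, as already noted in Sec.~\ref{subsec:redshift_functions}.

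Any expanding branch must therefore reach a turning point at the first zero of the effective potential below $A_*$, or else asymptote to a static root; in either case it stays bounded. I would state plainly that this establishes boundedness, not recollapse, which is all the theorem claims. I would also note that the argument uses only the angular Darmois condition. The $\tau\tau$ condition is automatically consistent in the one-function class, by the geodesic computation \eqref{eq:redshift_geodesic_equation} with $\Phi=0$.

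For $k=0,-1$ I will not try to derive anything new. I will check that the maximal homogeneous FLRW continuation defined by \eqref{eq:general_matching_equation} lies in the hypothesis class of the regular-end affine-ANEC theorem of Refs.~\cite{Burwig:2025hrr,Burwig:2026fsy}, and then invoke that theorem. The statement already assumes nonstaticity, curvature regularity and regular affine ends, so this step amounts to quoting the theorem's conclusion: null completeness and ANEC consistency cannot both hold. The flat Bardeen example and the open de Sitter-core model of Sec.~\ref{subsec:open_daughters} serve only as illustrations. In the open case, for instance, the finite affine length to $A=0$ exhibits the incompleteness alternative directly.

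The final sentence of the theorem then follows by cases. For $k=+1$, indefinite expansion fails. For $k=0,-1$, whenever nonstaticity and curvature regularity hold, completeness and ANEC consistency cannot both hold. In either case the four properties are not jointly attained.

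I expect the main obstacle to be the interface with the cited completeness theorem. One must confirm that the matched scale-factor law, continued maximally in both time directions, actually defines a regular affine end in the sense those references require. Otherwise one must make clear that this is an explicit hypothesis rather than something proved here. I would handle it by stating that the flat/open part is conditional on those endpoint classes, exactly as the theorem's wording already indicates. The closed-branch bound, by contrast, is elementary once $\sup m<\infty$ has been justified from finite ADM mass and regularity of $F$.
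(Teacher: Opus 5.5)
Your proposal is correct and follows the paper's proof. For the closed branch, the angular Darmois condition gives $\dot A^2=2m(A\sin\psi_b)/(A\sin^3\psi_b)-1\to-1$, which the paper writes as $E-F(R)\to-\sin^2\psi_b<0$. Your global $\sup m$ bound asks for more than you need: it is enough to use $\sup_{R\ge R_1}m$ and take $A_*\ge R_1/\sin\psi_b$, so no regularity of $F$ at the center is required. For the flat/open half, you appeal directly to the cited affine-ANEC theorem and treat regular affine ends as an explicit hypothesis, exactly as the paper does.
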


The proof separates into two logically distinct ingredients. The closed branch
is bounded by the finite-ADM matching equation. The cited flat/open theorem is
a conditional statement about the regular endpoint classes specified in those
references; since it obstructs null completeness, it also obstructs full
geodesic completeness within that class. For the explicit Bardeen flat and
open branches, null incompleteness is established directly by the affine-length
estimates in Eq.~\eqref{eq:flat_affine_incompleteness} and
Sec.~\ref{subsec:open_daughters}, respectively. These statements concern maximal FLRW
continuations and do not, by themselves, decide the geodesic completeness of a
separately specified global parent-plus-daughter composite. As shown in
Sec.~\ref{subsec:redshift_functions}, the closed-branch boundedness also
extends to redshifted static parents in the same comoving no-shell setup.

\subsection{Closed daughters from asymptotically flat parents are bounded}
\label{subsec:closed_recollapse}

We now assume only that the parent geometry is asymptotically flat with finite
ADM mass,
\begin{equation}
F(R)=1-\frac{2M}{R}+o(R^{-1}),
\qquad M>0.
\label{eq:asymptotic_flatness}
\end{equation}
The resulting boundedness statement is given in
Proposition~\ref{prop:closed_recollapse}, and the mechanism is illustrated in
Fig.~\ref{fig:bardeen_recollapse_obstruction} for a representative
asymptotically flat Bardeen daughter. The important point is that the
obstruction is not the absence of a bounce or the formation of a singular big
crunch. The branch is bounded between finite turning points. The obstruction is
instead that the minimal closed branch does not yield an indefinitely expanding
daughter universe.

\begin{figure}[t]
\centering
\includegraphics[width=\columnwidth]{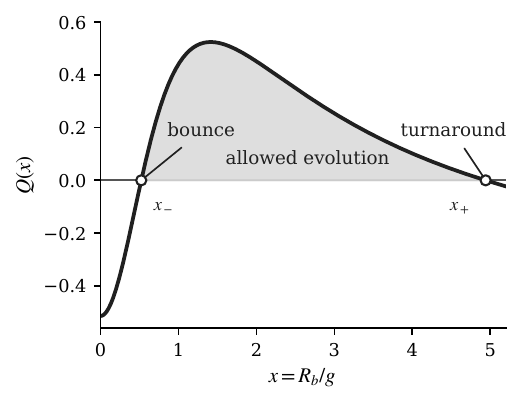}
\caption{Closed Bardeen daughter evolution in the asymptotically flat case.
The plotted function is
\(Q(x):=2\mu x^2/(1+x^2)^{3/2}-\sin^2\psi_b\), with
\(x=R_b/g=A\sin\psi_b/g\), \(\eta=\tau/g\), \(\mu=M/g\), and representative
parameters \((\mu,\psi_b)=(1.35,0.80)\). On the dynamically allowed interval,
\(Q(x)=(dx/d\eta)^2\ge0\), which is the shaded region. The two zeros \(x_-\) and \(x_+\) are
finite turning points: a nondegenerate expanding branch can emerge from the
inner turning point and reach an outer turnaround before recollapsing. The obstruction is the boundedness of the closed branch in the asymptotically
flat minimal construction, with evolution between finite turning points rather
than a singular big crunch.}
\label{fig:bardeen_recollapse_obstruction}
\end{figure}

\begin{proposition}[Closed daughter boundedness]
\label{prop:closed_recollapse}
For any no-shell closed FLRW daughter universe with nondegenerate matching
surface $0<\psi_b\le \pi/2$, matched to a static asymptotically flat parent
geometry with finite ADM mass and curvature-coordinate function
$F(R)=1-2M/R+o(R^{-1})$, the daughter branch cannot extend to arbitrarily
large $A$. If the allowed interval has a simple finite outer zero, the expanding branch
reaches a finite turning point and recollapses. Degenerate endpoints are tuned
limiting cases corresponding to static or asymptotically static behavior.
\end{proposition}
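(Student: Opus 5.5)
The plan is to work entirely with the angular Darmois condition in the form of Eq.~\eqref{eq:closed_matching_equation}, or equivalently Eq.~\eqref{eq:redshift_angular_condition} with $\beta^2=\cos^2\psi_b$. Set $s=\sin\psi_b\in(0,1]$ and $R_b=As$, and multiply Eq.~\eqref{eq:closed_matching_equation} by $A^2$. This gives
\begin{equation}
\dot A^2 = V(A), \qquad V(A):=\frac{1-F(As)}{s^2}-1,
\end{equation}
so the motion is a one-dimensional problem with effective potential $-V$. The dynamically allowed set is $\{A: V(A)\ge 0\}$. Inserting the asymptotic form \eqref{eq:asymptotic_flatness} gives
\begin{equation}
V(A) = \frac{2M}{A s^3} - 1 + o(A^{-1}),
\end{equation}
where the $o(A^{-1})$ term uses $o(R^{-1})/s^2$ with $s$ fixed and positive; this is where nondegeneracy $\psi_b>0$ enters.

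First I would show that $V(A)\to -1$ as $A\to\infty$. Then there is $A_*$ with $V(A)<-1/2$ for all $A>A_*$, so no solution can have $A(\tau)>A_*$. This already yields the main claim that the branch cannot reach arbitrarily large $A$. It is worth noting that this is simply the $A^{-3}$ versus $A^{-2}$ competition: in Hubble form, $H^2=2m(As)/(A^3s^3)-1/A^2$, and $m$ stays bounded. The redshifted case needs no extra work, since Sec.~\ref{subsec:redshift_functions} showed that $\Phi$ does not enter the angular condition.

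Second, for the turning-point statement, let $A_+:=\sup$ of the connected allowed interval containing the initial data, which is finite by step one, so $V(A_+)=0$. Assume the zero is simple, $V'(A_+)<0$. Differentiating $\dot A^2=V$ gives $\ddot A=\tfrac12 V'(A)$ wherever $\dot A\neq0$. By continuity, at $A_+$ we get $\ddot A=\tfrac12V'(A_+)<0$. An expanding solution therefore reaches $A_+$ in finite proper time, since $\int dA/\sqrt{V}$ converges near a simple zero where $V\sim |V'(A_+)|(A_+-A)$. At $A_+$, $\dot A=0$ and $\ddot A<0$, so $A$ decreases afterward: this is turnaround and recollapse. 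I would also note that uniqueness is handled by using the second-order equation $\ddot A=\tfrac12V'$ rather than the first integral, which has the spurious static solution at a zero of $V$.

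Finally, for degenerate endpoints where $V(A_+)=V'(A_+)=0$, the same integral $\int dA/\sqrt{V}$ diverges because $V=O((A_+-A)^2)$. The branch then approaches $A_+$ only asymptotically as $\tau\to\infty$, or sits at the static solution $A\equiv A_+$; both are the tuned limiting cases. I expect the main subtle step to be making the ODE argument at zeros of $V$ rigorous. This means choosing the second-order form consistent with the $\tau\tau$ (geodesic) condition, and checking that $F$ is regular enough, at least $C^1$ near $A_+s$, to justify both the Taylor estimate and the uniqueness argument.
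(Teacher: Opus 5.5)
Your proposal is correct and follows essentially the same route as the paper. The paper also works with the angular Darmois first integral $\dot R_b^{\,2}=\cos^2\psi_b-F(R_b)$, which is your $\dot A^2=V(A)$ after dividing by $\sin^2\psi_b$; it uses asymptotic flatness to get the limit $-\sin^2\psi_b<0$ (equivalently, the $A^{-2}$ curvature term dominating the $A^{-3}$ mass term), and then asserts the simple-zero turnaround and the degenerate-zero tuned cases. Your added details (finite arrival time from $\int dA/\sqrt{V}$, and turnaround via the second-order geodesic equation supplied by the $\tau\tau$ condition) flesh out what the paper states tersely, with one small correction: uniqueness and the estimate $V=O\bigl((A_+-A)^2\bigr)$ at a degenerate zero need $F'$ Lipschitz (e.g.\ $F\in C^2$), not merely $F\in C^1$.
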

\begin{proof}
The closed mechanical form gives
\[
\dot R_b^{\,2}=E-F(R_b),
\qquad
E=\cos^2\psi_b .
\]
For $0<\psi_b\le \pi/2$, one has $E<1$, while asymptotic flatness gives
$F(R)\to1$. Hence
\[
E-F(R)\longrightarrow -\sin^2\psi_b<0 ,
\]
so the physical region $\dot R_b^{\,2}\ge0$ cannot extend to arbitrarily large
$R_b$. Since $R_b=A\sin\psi_b$ and $\sin\psi_b>0$, the daughter branch cannot
extend to arbitrarily large $A$.

Equivalently, the large-\(A\) Friedmann form is
\[
H^2=
-\frac{1}{A^2}
+\frac{2M}{A^3\sin^3\psi_b}
+o(A^{-3})
\qquad (A\to\infty).
\]
Thus the $A^{-2}$ curvature term eventually dominates the induced $A^{-3}$
matter term, and $H^2<0$ at sufficiently large scale factor. Along a physical
daughter solution one must have $H^2\ge0$. Therefore an expanding branch cannot
be unbounded.

If the upper endpoint of an allowed connected component is a simple zero of
$E-F(R_b)$, the expanding solution reaches a finite outer turning point
$A_{\max}$ with $H(A_{\max})=0$ and then recollapses. Degenerate zeros are tuned
limiting cases and give static or asymptotically static behavior rather than
an indefinitely expanding daughter universe.
\end{proof}

This is a class statement depending only on the angular junction equation
and the asymptotic Schwarzschild falloff of the parent geometry, not on whether
the regular core is of de Sitter type, Minkowski type, or some other
nonsingular form. In mass-function language, the obstruction is simply the
finite-ADM limit \(m(R)\to M\).

\subsection{Expansion duration and tuning of the closed branch}
\label{subsec:duration_tuning}

The closed Bardeen branch illustrates that boundedness means a finite duration
for each expansion phase, not a finite spacetime lifetime or a singular
collapse.
Define
\begin{equation}
x:=\frac{R_b}{g}=\frac{A\sin\psi_b}{g},
\qquad
\eta:=\frac{\tau}{g},
\qquad
\mu:=\frac{M}{g}.
\label{eq:dimensionless_closed_vars}
\end{equation}
Equation~\eqref{eq:bardeen_closed_matching} becomes
\begin{equation}
\left(\frac{dx}{d\eta}\right)^2
=
\frac{2\mu x^2}{(1+x^2)^{3/2}}-\sin^2\psi_b.
\label{eq:bardeen_dimensionless_xdot}
\end{equation}
The first term on the right-hand side has maximum at \(x=\sqrt2\), with
value \(4\mu/(3\sqrt3)\). Thus an allowed nondegenerate closed Bardeen branch
exists when
\begin{equation}
\sin^2\psi_b<\frac{4\mu}{3\sqrt3},
\label{eq:bardeen_allowed_condition}
\end{equation}
while equality gives a tuned static limiting case. For a nonextremal Bardeen
black hole, \(\mu>3\sqrt3/4\), this condition is automatically satisfied for
\(0<\psi_b\le\pi/2\). The allowed region is therefore the finite interval
between the two positive roots \(x_-\) and \(x_+\) of the right-hand side.
The simple roots are regular turning points of the second-order equation
\[
\frac{d^2x}{d\eta^2}
=
\frac12\frac{d}{dx}
\left[
\frac{2\mu x^2}{(1+x^2)^{3/2}}-\sin^2\psi_b
\right].
\]
Consequently, the maximal FLRW scale-factor solution can be continued smoothly
through the outer turnaround and the inner bounce, giving a nonsingular
periodic evolution. It fails the criterion adopted here because it is bounded,
not because the scale-factor solution must end after one cycle.

The expansion time from the inner to the outer turning point is
\begin{equation}
\Delta\tau_{\rm exp}
=
g\int_{x_-}^{x_+}
\frac{dx}{\sqrt{\dfrac{2\mu x^2}{(1+x^2)^{3/2}}-\sin^2\psi_b}}.
\label{eq:closed_expansion_time_integral}
\end{equation}
For order-one matching data this time is of order the black-hole scale. It can
be made parametrically large by taking \(\psi_b\ll1\). In that limit
\(x_+\sim2\mu/\sin^2\psi_b\), and the large-\(x\) part of the integral gives
\begin{equation}
\Delta\tau_{\rm exp}
\sim
\frac{\pi M}{\sin^3\psi_b}
\qquad (\psi_b\ll1).
\label{eq:closed_expansion_time_scaling}
\end{equation}
This long expansion-time limit is a tuning of the matched branch rather than
a way to circumvent the result. The parameter $\psi_b$ is global junction
data fixing the comoving angular location of the FLRW boundary. Thus an
arbitrarily long expansion phase can be engineered by tuning the matching
construction, but it is not a prediction of the asymptotically flat parent
geometry itself.

The coalescing-root endpoint does not provide a second long-lived limit. Write
\[
\sin^2\psi_b=\frac{4\mu}{3\sqrt3}-\delta,
\qquad \delta\downarrow0.
\]
Expanding about $x_0=\sqrt2$ gives
\[
\frac{2\mu x^2}{(1+x^2)^{3/2}}-\sin^2\psi_b
=
\delta-\frac{4\sqrt3\,\mu}{27}(x-\sqrt2)^2
+O\!\left((x-\sqrt2)^3\right).
\]
It follows that
\[
\Delta\tau_{\rm exp}
\longrightarrow
\frac{3\,3^{1/4}\pi g}{2\sqrt\mu},
\]
which is finite. Moreover, for a nonextremal Bardeen black hole,
$\mu>3\sqrt3/4$ makes $4\mu/(3\sqrt3)>1$, so this equality is not reachable
with real matching data $\sin^2\psi_b\le1$. The only parametrically long
expansion regime exhibited here is therefore $\psi_b\to0$.

\subsection{Flat and open daughters avoid boundedness but fail completeness}
\label{subsec:flat_incomplete}

For the flat daughter branch, the large-\(A\) asymptotic from
Eqs.~\eqref{eq:flat_matching_equation} and \eqref{eq:asymptotic_flatness} is
\begin{equation}
H^2\sim \frac{2M}{A^3\chi_b^3},
\qquad
A(\tau)\propto \tau^{2/3}.
\label{eq:flat_late_H2}
\end{equation}
Thus the specific boundedness mechanism of the closed branch is absent. In
applying the flat/open completeness condition, the object being tested is the
maximal FLRW daughter spacetime determined by the matched scale-factor law
\(A(\tau)\), not merely the finite comoving ball bounded by the junction
surface. This maximal-FLRW test is not a completeness theorem for an otherwise
unspecified global parent-plus-daughter composite.

If the parent also has a de Sitter-like regular core,
\begin{equation}
F(R)=1-H_0^2R^2+O(R^4)
\qquad (R\to0),
\label{eq:desitter_core_parent}
\end{equation}
then Eq.~\eqref{eq:flat_matching_equation} implies
\begin{equation}
H^2\to H_0^2
\qquad (A\to0),
\label{eq:flat_early_desitter}
\end{equation}
so the early-time branch is asymptotically de Sitter-like,
\begin{equation}
A(\tau)\sim e^{H_0\tau}
\qquad (\tau\to-\infty).
\label{eq:flat_early_scale_factor}
\end{equation}
This makes the incompleteness mechanism explicit. For a radial null geodesic
in the flat FLRW metric, the conserved comoving momentum gives
\(d\lambda\propto A(\tau)d\tau\), where \(\lambda\) is an affine parameter.
Thus the past affine length of the asymptotic de Sitter branch is finite:
\begin{equation}
\Delta\lambda
\propto
\int_{-\infty}^{\tau_0} A(\tau)\,d\tau
\sim
\int_{-\infty}^{\tau_0} e^{H_0\tau}\,d\tau
<\infty .
\label{eq:flat_affine_incompleteness}
\end{equation}

Thus the flat daughter model is nonsingular in cosmic time and monotonic, but
it is past null incomplete via the general
flat/open FLRW completeness obstruction. More generally, non-static
curvature-regular flat/open FLRW spacetimes with regular affine ends cannot be
both null geodesically complete and ANEC-consistent
\cite{Burwig:2025hrr,Burwig:2026fsy}. Hence the flat daughter branch avoids
the closed-universe boundedness mechanism, but it does not satisfy our full
desired set of properties.

The open branch is in the same completeness class. As discussed in
Sec.~\ref{subsec:open_daughters}, it avoids the closed-branch boundedness
mechanism, but the local open de Sitter behavior is only a curvature-regular
FLRW patch and is not geodesically complete as an FLRW spacetime. Combining this
flat/open completeness obstruction with
Proposition~\ref{prop:closed_recollapse} proves
Theorem~\ref{thm:minimal_frw_daughters}: the closed branch is bounded by finite
ADM mass, while flat and open branches fail the regular-complete-ANEC conditions
by the general FLRW completeness theorem.

\subsection{Escape routes}
\label{subsec:escape_routes}

The assumptions enter the closed and flat/open models differently. In the
closed model, the large-$A$ failure follows from the finite-ADM limit
$m(R)\to M$, which makes the induced no-shell density scale as $A^{-3}$. An
unbounded closed daughter therefore requires an ingredient that changes this
late-time balance: modified asymptotics for which the effective mass function
does not approach a constant, an independent shell stress tensor, a non-FLRW or
non-comoving daughter geometry, or an additional smooth bulk component whose
density redshifts no faster than $A^{-2}$. A positive vacuum-energy component
is the simplest example, but adding it lies outside the asymptotically flat
minimal construction considered here.

Within the regular-end class of the cited theorem, a flat/open evasion must
give up at least one of the desired conditions: curvature regularity, null
completeness, ANEC consistency, the FLRW ansatz, or the flat/open curvature
class. A construction with irregular affine-end behavior instead lies outside
the theorem's stated hypotheses.

\section{Core type and source limitations}
\label{sec:discussion}

The preceding theorem is a structural statement about
asymptotically flat constructions, but does not apply to all
black-hole-universe models. Two possible escape attempts are especially
natural: changing the core type and using the parent matter source as the
daughter source.

\subsection{De Sitter cores versus Minkowski cores}
\label{subsec:core_types}

The Bardeen geometry has a locally de Sitter-like core, which is often one of
the motivations for viewing regular black-hole interiors as cosmological
seeds. This local behavior is compatible with recent convergence-condition
analyses: de Sitter-core regular black holes can preserve the null condition
while evading singularity theorems through strong-energy-condition violation
\cite{Borissova:2025hmj}. In our case, however, the core type
does not control the large-$A$ fate of the daughter.

Minkowski-core parents provide a useful comparison. Their early-time daughter
patch can approach a Minkowski-like nonsingular regime rather than an
asymptotically de Sitter-like one, as illustrated in
Ref.~\cite{LiWuGe:2025}. This changes the early-time patch but not the
closed-branch obstruction: for $k=+1$, boundedness follows from asymptotic
flatness and finite ADM mass. For $k=0$ and $k=-1$, changing the core type
likewise does not by itself evade the general flat/open completeness theorem.

\subsection{Can the parent matter source rescue the daughter universe?}
\label{subsec:source_rescue}

One might also ask whether the matter source associated with the parent
regular black hole can stabilize the daughter cosmology. We do not prove a
general no-go theorem, but the minimal Bardeen/nonlinear-electrodynamics route
is disfavored for two simple reasons.

First, the unchanged static source remains anisotropic away from the center and
decays too rapidly at large areal radius. For Bardeen, the static density falls
as $r^{-5}$, while the no-shell Friedmann density induced by the exterior mass
profile falls as $A^{-3}$. Neither behavior supplies the persistent
$A^{-2}$-or-slower support required to evade the closed-branch obstruction.

Second, promoting the same broad matter class to a homogeneous or coarse-grained
cosmological source is not automatic. A homogeneous magnetic sector must be
isotropized or averaged, and in an ordinary Maxwellian weak-field completion
the late-time behavior is radiation-like rather than dark-energy-like. The
original Ay\'on-Beato--Garc\'ia realization of Bardeen is not itself a standard
Maxwell weak-field model, so this should not be read as a no-go theorem for all
nonlinear-electrodynamics cosmologies. Our point is that neither the unchanged anisotropic parent source nor ordinary
weak-field homogeneous continuations naturally provide the persistent
$w\le -1/3$ support needed for an unbounded closed daughter.

\section{Conclusions}
\label{sec:conclusions}

We have examined whether asymptotically flat regular black holes can admit
minimal FLRW daughter cosmologies. We identify a structural obstruction. The
trapped interior of the parent spacetime is not itself an FLRW cosmology, so an
FLRW daughter must be introduced as a separate matched region. In the minimal
no-shell construction, closed daughters are bounded rather than indefinitely
expanding. Flat and open daughters can avoid this boundedness mechanism, but,
for their maximal homogeneous FLRW continuations within the regular endpoint
classes of the affine-ANEC theorem, non-static curvature-regular models cannot
be both null geodesically complete and ANEC-consistent.

These conclusions are not tied to one special metric. The closed-branch result
follows from asymptotic flatness and finite ADM mass, while the flat/open
obstruction follows from the stated completeness constraints on non-static FLRW
cosmologies. Changing the detailed core structure from de Sitter-like to
Minkowski-like modifies the early-time daughter patch but does not evade this
dichotomy within the same hypotheses. Likewise, the parent matter source does
not naturally provide the late-time support required to evade the
closed-branch obstruction. We have not proved that every separately specified
global spacetime formed from a finite FLRW cap and a parent region is
geodesically incomplete; that is a distinct global-extension problem.

Our result does not exclude more elaborate black-hole-universe constructions
with shells, modified asymptotics, non-FLRW daughters, or additional smooth bulk
components. Rather, it identifies which ingredients must be changed within the
minimal criteria adopted here. Closed FLRW daughters require additional
late-time support absent from finite-ADM no-shell matching, while flat/open FLRW
continuations must leave the hypotheses of the flat/open completeness theorem.
The main lesson is that a regular core, even a de Sitter-like one, is not by
itself sufficient to produce a viable, indefinitely expanding FLRW daughter
cosmology in the sense considered here.

%\newpage
\acknowledgments
DAE was supported in part by the U.S. Department of Energy,
Office of High Energy Physics, under Award Number DE-SC0019470.

%\newpage
\appendix

\section{Darmois conditions for the comoving no-shell boundary}
\label{app:darmois}

Here we present the full no-shell matching conditions used in
Sec.~\ref{subsec:general_junction}. For the FLRW daughter metric
\eqref{eq:generic_frw_metric}, a boundary at fixed \(\chi=\chi_b\) has areal
radius \(R_b=A\chi_b\). The outward unit normal on the FLRW side is
\begin{equation}
n_-^\mu\partial_\mu=\frac{\sqrt{1-k\chi_b^2}}{A}\,\partial_\chi,
\end{equation}
and therefore
\begin{equation}
K_{\theta\theta}^- = R_b\,n_-^\mu\partial_\mu R
=R_b\sqrt{1-k\chi_b^2}.
\label{eq:appendix_Ktheta_minus}
\end{equation}
On the parent side, with trajectory
\(x_+^\mu=(T(\tau),R_b(\tau),\theta,\varphi)\), the induced-metric condition is
\begin{equation}
F(R_b)\dot T^{\,2}-\frac{\dot R_b^{\,2}}{F(R_b)}=1.
\end{equation}
Choosing the outward normal consistently with the orientation in the main text
gives
\begin{equation}
K_{\theta\theta}^+=R_b F(R_b)\dot T
=R_b\sqrt{F(R_b)+\dot R_b^{\,2}}.
\label{eq:appendix_Ktheta_plus}
\end{equation}
Equating Eqs.~\eqref{eq:appendix_Ktheta_minus} and
\eqref{eq:appendix_Ktheta_plus} yields
\begin{equation}
\dot R_b^{\,2}+F(R_b)=1-k\chi_b^2,
\end{equation}
which is Eq.~\eqref{eq:rr_matching_condition}. For the closed parametrization
\(R_b=A\sin\psi_b\), the same calculation gives
\begin{equation}
K_{\theta\theta}^- = R_b\cos\psi_b,
\qquad
\dot R_b^{\,2}+F(R_b)=\cos^2\psi_b .
\end{equation}

For a smooth nonstatic trajectory, the remaining \(\tau\tau\) component is
consistent with the same comoving no-shell construction; the identically
static case is treated separately in this appendix. On the FLRW side the boundary is comoving and geodesic, hence
\(K_{\tau\tau}^-=0\). On the parent side, differentiating the first integral
away from an isolated turning point gives
\begin{equation}
\ddot R_b=-\frac12 F'(R_b),
\label{eq:appendix_boundary_geodesic_equation}
\end{equation}
which is precisely the radial geodesic equation for the trajectory in the
one-function parent metric. Thus \(K_{\tau\tau}^+=0\). At an isolated turning
point this relation is understood by continuity, or equivalently by using the
local second-order geodesic equation rather than dividing by \(\dot R_b\).
Hence the angular condition together with induced-metric matching gives the
full Darmois no-shell matching for a smooth nonstatic comoving construction.

An identically static boundary is exceptional: differentiation of the first
integral gives no information. For a timelike static boundary at $R_b=R_0$,
full matching additionally requires $F'(R_0)=0$ in the one-function metric,
besides the angular condition $F(R_0)=1-k\chi_b^2$ (or
$F(R_0)=\cos^2\psi_b$ in the closed parametrization). For the redshifted
metric \eqref{eq:general_redshift_metric}, the angular condition is unchanged,
while the \(\tau\tau\) condition gives the additional constraint discussed in
Sec.~\ref{subsec:redshift_functions}.


\begin{thebibliography}{99}

%\cite{Wall:2009wi}
\bibitem{Wall:2009wi}
A.~C.~Wall,
``Proving the Achronal Averaged Null Energy Condition from the Generalized Second Law,''
Phys. Rev. D \textbf{81}, 024038 (2010)
doi:10.1103/PhysRevD.81.024038
[arXiv:0910.5751 [gr-qc]].
%78 citations counted in INSPIRE as of 19 Jun 2026

\bibitem{Curiel:2014zba}
E.~Curiel,
``A Primer on Energy Conditions,''
Einstein Stud. \textbf{13}, 43--104 (2017)
doi:10.1007/978-1-4939-3210-8{\_}3
[arXiv:1405.0403 [physics.hist-ph]].

\bibitem{Burwig:2025hrr}
N.~L.~Burwig and D.~A.~Easson,
``Open case for a closed universe,''
Phys. Rev. D \textbf{113}, no.8, 083530 (2026)
doi:10.1103/mn3v-myzc
[arXiv:2510.13971 [hep-th]].

%\cite{Burwig:2026fsy}
\bibitem{Burwig:2026fsy}
N.~L.~Burwig and D.~A.~Easson,
``Affine ANEC selects the closed FRW branch for geodesically complete cosmology,''
[arXiv:2605.18965 [gr-qc]].
%1 citations counted in INSPIRE as of 12 Jun 2026

\bibitem{Pathria:1972}
R.~K.~Pathria,
``The Universe as a Black Hole,''
Nature {\bf 240}, 298 (1972).

%\cite{Easson:2001qf}
\bibitem{Easson:2001qf}
D.~A.~Easson and R.~H.~Brandenberger,
``Universe generation from black hole interiors,''
JHEP \textbf{06}, 024 (2001)
doi:10.1088/1126-6708/2001/06/024
[arXiv:hep-th/0103019 [hep-th]].
%101 citations counted in INSPIRE as of 19 Jun 2026

%\cite{Blau:1986cw}
\bibitem{Blau:1986cw}
S.~K.~Blau, E.~I.~Guendelman and A.~H.~Guth,
``The Dynamics of False Vacuum Bubbles,''
Phys. Rev. D \textbf{35}, 1747 (1987)
doi:10.1103/PhysRevD.35.1747
%457 citations counted in INSPIRE as of 12 Jun 2026

\bibitem{Frolov:1989}
V.~P.~Frolov, M.~A.~Markov, and V.~F.~Mukhanov,
``Through a black hole into a new universe?''
Phys.\ Lett.\ B {\bf 216}, 272 (1989).

\bibitem{Frolov:1990}
V.~P.~Frolov, M.~A.~Markov, and V.~F.~Mukhanov,
``Black holes as possible sources of closed and semiclosed worlds,''
Phys.\ Rev.\ D {\bf 41}, 383 (1990).

\bibitem{FarhiGuthGuven:1990}
E.~Farhi, A.~H.~Guth, and J.~Guven,
``Is it possible to create a universe in the laboratory by quantum tunneling?''
Nucl.\ Phys.\ B {\bf 339}, 417 (1990).

\bibitem{Trodden:1993xh}
M.~Trodden, V.~F.~Mukhanov and R.~H.~Brandenberger,
``A nonsingular two dimensional black hole,''
Phys. Lett. B \textbf{316}, 483 (1993)
[arXiv:hep-th/9305111].

%\cite{Easson:2017pfe}
\bibitem{Easson:2017pfe}
D.~A.~Easson,
``Nonsingular Schwarzschild{\textendash}de Sitter black hole,''
Class. Quant. Grav. \textbf{35}, no.23, 235005 (2018)
doi:10.1088/1361-6382/aae85f
[arXiv:1712.09455 [hep-th]].
%17 citations counted in INSPIRE as of 12 Jun 2026

%\cite{Frolov:2021kcv}
\bibitem{Frolov:2021kcv}
V.~P.~Frolov and A.~Zelnikov,
``Two-dimensional black holes in the limiting curvature theory of gravity,''
JHEP \textbf{08}, 154 (2021)
doi:10.1007/JHEP08(2021)154
[arXiv:2105.12808 [hep-th]].
%18 citations counted in INSPIRE as of 19 Jun 2026


\bibitem{Dymnikova:2001}
I.~G.~Dymnikova, A.~Dobosz, M.~L.~Fil'chenkov, and A.~Gromov,
``Universes inside a \(\Lambda\) black hole,''
Phys.\ Lett.\ B {\bf 506}, 351 (2001)
[arXiv:gr-qc/0102032].

\bibitem{Dymnikova:2019}
I.~Dymnikova,
``Universes Inside a Black Hole with the de Sitter Interior,''
Universe {\bf 5}, 111 (2019).

\bibitem{Bardeen:1968}
J.~M.~Bardeen,
``Non-singular general-relativistic gravitational collapse,''
in {\it Proceedings of the International Conference GR5},
Tbilisi, U.S.S.R. (1968).

\bibitem{AyonBeatoGarcia:2000}
E.~Ay{\'o}n-Beato and A.~Garc{\'\i}a,
``The Bardeen model as a nonlinear magnetic monopole,''
Phys.\ Lett.\ B {\bf 493}, 149 (2000)
[arXiv:gr-qc/0009077].

\bibitem{Hayward:2006}
S.~A.~Hayward,
``Formation and evaporation of regular black holes,''
Phys.\ Rev.\ Lett.\ {\bf 96}, 031103 (2006)
[arXiv:gr-qc/0506126].

\bibitem{Ansoldi:2008}
S.~Ansoldi,
``Spherical black holes with regular center: A review of existing models including a recent realization with Gaussian sources,''
[arXiv:0802.0330 [gr-qc]].

\bibitem{BalartVagenas:2014}
L.~Balart and E.~C.~Vagenas,
``Regular black hole metrics and the weak energy condition,''
Phys.\ Lett.\ B {\bf 730}, 14 (2014)
[arXiv:1401.2136 [gr-qc]].

\bibitem{Fan:2016rih}
Z.-Y.~Fan and X.~Wang,
``Construction of regular black holes in general relativity,''
Phys. Rev. D \textbf{94}, no.12, 124027 (2016)
doi:10.1103/PhysRevD.94.124027
[arXiv:1610.02636 [gr-qc]].

\bibitem{Borde:1994}
A.~Borde and A.~Vilenkin,
``Eternal inflation and the initial singularity,''
Phys. Rev. Lett. \textbf{72}, 3305 (1994)
doi:10.1103/PhysRevLett.72.3305
[arXiv:gr-qc/9312022].

\bibitem{BordeGuthVilenkin:2003}
A.~Borde, A.~H.~Guth and A.~Vilenkin,
``Inflationary spacetimes are incomplete in past directions,''
Phys. Rev. Lett. \textbf{90}, 151301 (2003)
doi:10.1103/PhysRevLett.90.151301
[arXiv:gr-qc/0110012].

\bibitem{LesnefskyEassonDavies:2023}
J.~E.~Lesnefsky, D.~A.~Easson, and P.~C.~W.~Davies,
``On the past-completeness of inflationary spacetimes,''
Phys.\ Rev.\ D {\bf 107}, 044024 (2023)
[arXiv:2207.00955 [gr-qc]].

\bibitem{Easson:2024uxe}
D.~A.~Easson and J.~E.~Lesnefsky,
``Inflationary resolution of the initial singularity,''
Phys. Lett. B \textbf{875}, 140370 (2026)
doi:10.1016/j.physletb.2026.140370
[arXiv:2402.13031 [hep-th]].

\bibitem{Easson:2024fzn}
D.~A.~Easson and J.~E.~Lesnefsky,
``Eternal universes,''
Phys. Rev. D \textbf{112}, no.6, 063545 (2025)
doi:10.1103/5mhz-m8bg
[arXiv:2404.03016 [hep-th]].

%\cite{Easson:2026ret}
\bibitem{Easson:2026ret}
D.~A.~Easson,
``Geodesically Complete Curvature-Bounce Inflation,''
[arXiv:2604.27103 [astro-ph.CO]].
%1 citations counted in INSPIRE as of 12 Jun 2026

\bibitem{OppenheimerSnyder:1939}
J.~R.~Oppenheimer and H.~Snyder,
``On Continued Gravitational Contraction,''
Phys.\ Rev.\ {\bf 56}, 455 (1939).

\bibitem{Israel:1966}
W.~Israel,
``Singular hypersurfaces and thin shells in general relativity,''
Nuovo Cim.\ B {\bf 44}, 1 (1966).

\bibitem{MisnerSharp:1964}
C.~W.~Misner and D.~H.~Sharp,
``Relativistic equations for adiabatic, spherically symmetric gravitational collapse,''
Phys.\ Rev.\ {\bf 136}, B571 (1964).

\bibitem{Shojai:2022}
F.~Shojai, A.~Sadeghi, and R.~Hassannejad,
``Generalized Oppenheimer-Snyder Gravitational Collapse into Regular Black holes,''
Class.\ Quantum Grav.\ {\bf 39}, 085003 (2022)
[arXiv:2202.14024 [gr-qc]].

\bibitem{Bobula:2024}
M.~Bobula,
``Cosmic inflation prevents singularity formation in collapse into a Hayward black hole,''
Class.\ Quantum Grav.\ {\bf 42}, 065027 (2025)
[arXiv:2404.12243 [gr-qc]].

\bibitem{Bueno:2025}
P.~Bueno, P.~A.~Cano, R.~A.~Hennigar, {\'A}.~J.~Murcia and A.~Vicente-Cano,
``Regular black holes from Oppenheimer-Snyder collapse,''
Phys. Rev. D \textbf{112}, no.6, 064039 (2025)
doi:10.1103/qrbb-mdvm
[arXiv:2505.09680 [gr-qc]].

\bibitem{LiWuGe:2025}
S.~Li, J.-P.~Wu, and X.-H.~Ge,
``Non-singular cosmologies matching regular black holes,''
[arXiv:2512.00926 [gr-qc]].

%\cite{Bobula:2026zlq}
\bibitem{Bobula:2026zlq}
M.~Bobula and F.~Fazzini,
``Quantum gravitational stellar evolution beyond shell-crossing singularities,''
Phys. Rev. D \textbf{113}, no.10, 106016 (2026)
doi:10.1103/165q-lwmd
[arXiv:2601.18618 [gr-qc]].
%2 citations counted in INSPIRE as of 21 Jun 2026

\bibitem{Borissova:2025hmj}
J.~Borissova, S.~Liberati and M.~Visser,
``Timelike convergence condition in regular black-hole spacetimes with (anti-)de Sitter core,''
Phys. Rev. D \textbf{112}, no.10, 104072 (2025)
doi:10.1103/rrc9-g1sv
[arXiv:2509.08590 [gr-qc]].

%\cite{Easson:2020esh}
\bibitem{Easson:2020esh}
D.~A.~Easson, C.~Keeler and T.~Manton,
``Classical double copy of nonsingular black holes,''
Phys. Rev. D \textbf{102}, no.8, 086015 (2020)
doi:10.1103/PhysRevD.102.086015
[arXiv:2007.16186 [gr-qc]].
%73 citations counted in INSPIRE as of 12 Jun 2026

\bibitem{Easson:2026wex}
D.~A.~Easson and T.~Manton,
``Black Hole Interiors as a Laboratory for Time-Dependent Classical Double Copy,''
[arXiv:2604.19920 [hep-th]].



\end{thebibliography}
\end{document}